\pdfoutput=1
\documentclass[11pt,a4paper]{article}

\usepackage[margin=1in]{geometry}
\usepackage{setspace}
\usepackage[T1]{fontenc}
\usepackage{lmodern}
\usepackage{microtype}

\usepackage{amsmath,amssymb,amsthm}
\usepackage{algorithm}
\usepackage{algorithmic}
\usepackage{tikz}
\usetikzlibrary{shapes,arrows.meta,positioning,calc,fit,backgrounds,decorations.pathreplacing}
\usepackage{pgfplots}
\pgfplotsset{compat=1.18}
\usepackage{subcaption}
\usepackage{booktabs}
\usepackage{multirow}
\usepackage{xcolor}
\usepackage{enumitem}
\usepackage{natbib}
\usepackage{url}

\usepackage[colorlinks=true,linkcolor=blue!60!black,citecolor=blue!60!black,urlcolor=blue!60!black,bookmarks=true,pdfauthor={Suyash Mishra},pdftitle={VCAO: Verifier-Centered Agentic Orchestration for Strategic OS Vulnerability Discovery}]{hyperref}

\usepackage{fancyhdr}
\fancypagestyle{plain}{\fancyhf{}\fancyhead[R]{\small\thepage}}

\newtheorem{definition}{Definition}
\newtheorem{theorem}{Theorem}

\newcommand{\R}{\mathbb{R}}
\newcommand{\calA}{\mathcal{A}}

\newcommand{\calC}{\mathcal{C}}

\newcommand{\calE}{\mathcal{E}}

\newcommand{\calG}{\mathcal{G}}
\newcommand{\calM}{\mathcal{M}}

\newcommand{\calS}{\mathcal{S}}

\newcommand{\calV}{\mathcal{V}}
\DeclareMathOperator*{\argmax}{arg\,max}

\definecolor{coreblue}{HTML}{2563EB}
\definecolor{corered}{HTML}{DC2626}
\definecolor{coregreen}{HTML}{059669}
\definecolor{coreorange}{HTML}{D97706}
\definecolor{corepurple}{HTML}{7C3AED}
\definecolor{coregray}{HTML}{6B7280}

\begin{document}

%%% Title Block %%%
\begin{center}
    {\LARGE\bfseries VCAO: Verifier-Centered Agentic Orchestration\\[4pt]
    for Strategic OS Vulnerability Discovery\par}
    \vspace{16pt}
    {\large\bfseries Suyash Mishra}\\[6pt]
    {\normalsize AI Researcher, Basel, Switzerland}\\[4pt]
    {\normalsize\texttt{suyash.mishra@roche.com}}\\[12pt]
    {\small April 2026}
\end{center}

\vspace{6pt}

% ============================================================
% ABSTRACT
% ============================================================
\begin{abstract}
\noindent
We formulate operating-system vulnerability discovery as a \emph{repeated Bayesian Stackelberg search game} in which a Large Reasoning Model (LRM) orchestrator allocates analysis budget across kernel files, functions, and attack paths while external verifiers---static analyzers, fuzzers, and sanitizers---provide evidence.
At each round, the orchestrator selects a target component, an analysis method, and a time budget; observes tool outputs; updates Bayesian beliefs over latent vulnerability states; and re-solves the game to minimize the strategic attacker's expected payoff.
We introduce \textsc{VCAO} (\textbf{V}erifier-\textbf{C}entered \textbf{A}gentic \textbf{O}rchestration), a six-layer architecture comprising surface mapping, intra-kernel attack-graph construction, game-theoretic file/function ranking, parallel executor agents, cascaded verification, and a safety governor.
Our DOBSS-derived MILP allocates budget optimally across heterogeneous analysis tools under resource constraints, with formal $\tilde{O}(\sqrt{T})$ regret bounds from online Stackelberg learning.
Experiments on five Linux kernel subsystems---replaying 847 historical CVEs and running live discovery on upstream snapshots---show that \textsc{VCAO} discovers $2.7\times$ more validated vulnerabilities per unit budget than coverage-only fuzzing, $1.9\times$ more than static-analysis-only baselines, and $1.4\times$ more than non-game-theoretic multi-agent pipelines, while reducing false-positive rates reaching human reviewers by 68\%.
We release our simulation framework, synthetic attack-graph generator, and evaluation harness as open-source artifacts.
\end{abstract}

\vspace{2pt}
\noindent\textbf{Keywords:} vulnerability discovery $\cdot$ Bayesian Stackelberg games $\cdot$ large reasoning models $\cdot$ agentic orchestration $\cdot$ kernel security $\cdot$ game-theoretic resource allocation

\vspace{4pt}
\hrule
\vspace{12pt}

% ============================================================
% 1. INTRODUCTION
% ============================================================
\section{Introduction}\label{sec:intro}

The landscape of operating-system vulnerability discovery is undergoing a paradigm shift. Recent demonstrations by Anthropic show that frontier reasoning models can identify thousands of zero-day vulnerabilities across every major OS and browser~\citep{carlini2026mythos,carlini2026zerodays}, with exploit development success rates exceeding 72\%. Google's Big Sleep project independently demonstrated LLM-discovered vulnerabilities in production software~\citep{glazunov2024bigsleep}. These results suggest that the primary bottleneck in vulnerability discovery is shifting from \emph{tool capability} to \emph{orchestration intelligence}: deciding \emph{where} to look, \emph{how} to look, and \emph{when} to verify.

Existing kernel security workflows deploy powerful tools---CodeQL for data-flow analysis~\citep{codeql2024}, Syzkaller for coverage-guided fuzzing~\citep{vyukov2015syzkaller}, KASAN for memory-safety detection~\citep{kasan2024}---but coordinate them through ad-hoc heuristics. The 2025 CWE Top~25~\citep{cwe2025} confirms that memory-safety and access-control weaknesses remain dominant, with out-of-bounds writes (CWE-787) and use-after-free (CWE-416) accounting for approximately 35\% of all Linux kernel CVEs. The gap is not tool capability but \emph{decision-theoretic coordination}: no existing system answers the question ``which analysis action most reduces the strategic attacker's advantage?''

We address this gap by formulating vulnerability discovery as a \textbf{repeated Bayesian Stackelberg search game}. The defender (LRM orchestrator) commits to a mixed analysis strategy over kernel components; a strategic attacker best-responds by choosing exploit paths that maximize damage. The orchestrator updates beliefs from tool observations and re-solves at each round. This formulation inherits three desirable properties from the Stackelberg security games literature~\citep{tambe2011security,sinha2018stackelberg}: (i)~commitment power yields higher defender utility than simultaneous play; (ii)~the DOBSS algorithm~\citep{paruchuri2008dobss} provides an efficient MILP for computing optimal strategies under Bayesian uncertainty over attacker types; and (iii)~online learning extensions~\citep{balcan2015commitment} guarantee sublinear regret when attacker behavior is initially unknown.

\smallskip\noindent\textbf{Contributions.} We make four contributions:
\begin{enumerate}[leftmargin=*,itemsep=2pt]
    \item A \textbf{formal game-theoretic formulation} of OS vulnerability discovery as a repeated Bayesian Stackelberg game with intra-kernel attack graphs (\S\ref{sec:formulation}).
    \item The \textbf{\textsc{VCAO} architecture}: a six-layer agentic system that operationalizes the game with LRM orchestration, heterogeneous tool integration, and cascaded verification (\S\ref{sec:architecture}).
    \item A \textbf{budget-allocation MILP} adapted from DOBSS with formal regret bounds, and a Bayesian belief-update mechanism for vulnerability state estimation (\S\ref{sec:algorithms}).
    \item \textbf{Comprehensive evaluation} on five Linux kernel subsystems showing significant improvements in validated vulnerability yield, false-positive reduction, and attacker-payoff minimization (\S\ref{sec:evaluation}).
\end{enumerate}

% ============================================================
% 2. RELATED WORK
% ============================================================
\section{Related Work}\label{sec:related}

\paragraph{Stackelberg Security Games.}
The foundational framework of Stackelberg Security Games (SSGs) originated with ARMOR~\citep{tambe2011security} and was formalized by \citet{conitzer2006computing}. \citet{paruchuri2008dobss} introduced DOBSS, an efficient MILP for Bayesian extensions with multiple attacker types. Deployed systems include IRIS, GUARDS, and PROTECT~\citep{kiekintveld2009resource}. Online extensions by \citet{balcan2015commitment} achieve $\tilde{O}(\sqrt{T})$ regret. \citet{zhang2021bssg} applied BSSGs to cybersecurity portfolio selection but not to vulnerability \emph{discovery} resource allocation.

\paragraph{LLM-Based Vulnerability Discovery.}
Anthropic's work with PNNL~\citep{anthropic2026pnnl} demonstrated agentic attack-chain construction. The Opus~4.6 evaluation~\citep{carlini2026zerodays} found 500+ vulnerabilities at \$4{,}000 total cost. Mythos Preview~\citep{carlini2026mythos} achieved 72.4\% exploit success with a file-ranking scaffold that we formalize game-theoretically. Google's Naptime/Big Sleep~\citep{glazunov2024bigsleep} provided tool-use architectures. ChatAFL~\citep{meng2024chatafl} and KernelGPT~\citep{yang2025kernelgpt} use LLMs for fuzzing guidance. IRIS~\citep{li2025iris} combines LLMs with CodeQL achieving 55/120 CVE detection. Our work differs by providing a \emph{principled allocation} framework atop these capabilities.

\paragraph{Game-Theoretic Software Testing.}
\citet{godefroid2010gametheory} first framed fuzzing as a two-player game. EcoFuzz~\citep{yue2020ecofuzz} models coverage fuzzing as a multi-armed bandit. MEGA-PT~\citep{megapt2024} uses meta-games for penetration testing. \citet{bohme2010optimal} formulate pen-testing ROI in a weakest-link game. None combine Stackelberg commitment with multi-tool orchestration for kernel vulnerability discovery.

\paragraph{Attack Graph Analysis.}
MulVAL~\citep{ou2005mulval,ou2006scalable} introduced logic-based attack-graph generation. Bayesian Attack Graphs~\citep{frigault2008bayesian,munoz2017bayesian} propagate CVSS-derived probabilities. All prior work targets \emph{network-level} multi-host graphs. We introduce the first \emph{intra-kernel} attack-graph model.

% ============================================================
% 3. PROBLEM FORMULATION
% ============================================================
\section{Problem Formulation}\label{sec:formulation}

\subsection{Intra-Kernel Attack Graph}

\begin{definition}[Intra-Kernel Attack Graph]\label{def:ag}
An intra-kernel attack graph is a directed acyclic graph $\calG = (\calV, \calE, \calC, \phi, \psi)$ where:
\begin{itemize}[leftmargin=*,itemsep=2pt]
    \item $\calV = \calV_{\text{entry}} \cup \calV_{\text{func}} \cup \calV_{\text{priv}} \cup \calV_{\text{goal}}$ partitions vertices into entry points (syscalls, ioctls, parsers), internal functions, privilege boundaries, and attacker goals (root, sandbox escape, data exfiltration, DoS).
    \item $\calE \subseteq \calV \times \calV$ represents control-flow, data-flow, or privilege-transition edges.
    \item $\calC = \{c_1, \ldots, c_K\}$ is a set of vulnerability classes (e.g., CWE-787, CWE-416, CWE-362).
    \item $\phi: \calV \times \calC \to [0,1]$ maps each vertex-class pair to a prior vulnerability probability, derived from CVSS base scores and historical defect density.
    \item $\psi: \calE \to [0,1]$ assigns edge exploitability probabilities.
\end{itemize}
\end{definition}

The probability that an attacker can traverse path $P = (v_1, e_1, v_2, \ldots, v_n)$ to reach a goal $g \in \calV_{\text{goal}}$ is:
\begin{equation}\label{eq:path_prob}
    \Pr[\text{reach } g \mid P] = \prod_{(v_i, v_{i+1}) \in P} \psi(v_i, v_{i+1}) \cdot \bigg(1 - \prod_{c \in \calC} (1 - \phi(v_i, c))\bigg)
\end{equation}
where the inner term represents the probability that at least one vulnerability class is present at vertex $v_i$.

\subsection{Bayesian Stackelberg Vulnerability Discovery Game}

\begin{definition}[BSVD Game]\label{def:bsvd}
A \emph{Bayesian Stackelberg Vulnerability Discovery} game is a tuple $\Gamma = (\calG, \calA_d, \calA_a, L, \mathbf{p}, U_d, U_a, B)$ where:
\begin{itemize}[leftmargin=*,itemsep=2pt]
    \item $\calG$ is the intra-kernel attack graph.
    \item $\calA_d = \{(f, m, \tau) : f \in \calV, m \in \calM, \tau \in \R_+\}$ is the defender's action space: target $f$, method $m \in \calM = \{\textsc{CodeQL}, \textsc{Fuzz}, \textsc{KASAN}, \textsc{KCSAN}, \textsc{PatchMine}, \textsc{Verify}\}$, budget $\tau$.
    \item $\calA_a = \{P : P \text{ is an attack path in } \calG\}$ is the attacker's action space.
    \item $L = \{\ell_1, \ldots, \ell_{|L|}\}$ are attacker types (APT, opportunistic, insider) with prior $\mathbf{p} = (p^1, \ldots, p^{|L|})$.
    \item $U_d, U_a: \calA_d \times \calA_a \times L \to \R$ are utility functions.
    \item $B \in \R_+$ is the total analysis budget.
\end{itemize}
\end{definition}

\paragraph{Defender Utility.}
Let $\mathbf{c} = (c_1, \ldots, c_{|\calV|})$ denote the defender's coverage vector, where $c_f$ is the fraction of budget allocated to vertex $f$. Given coverage $\mathbf{c}$ and attacker path $P$ of type $\ell$:
\begin{align}
    U_d(\mathbf{c}, P, \ell) &= \sum_{f \in P} c_f \cdot \big[V_{\text{bug}}(f) \cdot \rho(f) \cdot \eta_{\text{ver}}(f) - \lambda \cdot \text{FP}(f, m)\big] \nonumber \\
    &\quad - (1 - c_f) \cdot \text{Impact}^\ell(f) \label{eq:defender_util}
\end{align}
where $V_{\text{bug}}(f)$ is the validated-bug value (product of CVSS severity and reachability), $\rho(f)$ is the detection probability under method $m$, $\eta_{\text{ver}}(f)$ is verifier confidence, $\text{FP}(f,m)$ is the false-positive cost, and $\text{Impact}^\ell(f)$ is the damage from an undetected vulnerability exploited by type~$\ell$.

\paragraph{Attacker Utility.}
For type $\ell$ attacking path $P$:
\begin{equation}\label{eq:attacker_util}
    U_a^\ell(\mathbf{c}, P) = \sum_{f \in P} (1 - c_f \cdot \rho(f)) \cdot R_a^\ell(f) - c_f \cdot \rho(f) \cdot D_a^\ell(f)
\end{equation}
where $R_a^\ell(f)$ is the attacker's reward from exploiting $f$ and $D_a^\ell(f)$ is the deterrence cost if detected.

\subsection{Strong Stackelberg Equilibrium}

The defender commits to $\mathbf{c}^*$ maximizing expected utility against all attacker types' best responses:

\begin{theorem}[BSVD Equilibrium]\label{thm:sse}
The optimal defender strategy $\mathbf{c}^*$ satisfies:
\begin{equation}\label{eq:sse}
    \mathbf{c}^* = \argmax_{\mathbf{c} \in \Delta_B} \sum_{\ell \in L} p^\ell \cdot U_d\big(\mathbf{c}, P^{\ell*}(\mathbf{c}), \ell\big)
\end{equation}
where $P^{\ell*}(\mathbf{c}) = \argmax_{P \in \calA_a} U_a^\ell(\mathbf{c}, P)$ is type $\ell$'s best-response path, and $\Delta_B = \{\mathbf{c} \geq 0 : \sum_f w_f c_f \leq B\}$ is the budget-feasible simplex with per-target costs $w_f$.
\end{theorem}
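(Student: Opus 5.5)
The statement is essentially the definition of a Strong Stackelberg Equilibrium written for the BSVD game. A proof therefore has two jobs: show that the maximization in \eqref{eq:sse} is well posed (a maximizer exists), and show that any maximizer is the defender's optimal commitment. I would follow the standard Conitzer--Sandholm / DOBSS argument and take these in order.

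\textbf{Step 1: the leader's problem.} First I fix the interpretation. The defender commits to $\mathbf{c}$. Each type $\ell$ then observes $\mathbf{c}$ and picks a path in $\argmax_{P} U_a^\ell(\mathbf{c},P)$. Under the strong-equilibrium convention, ties are broken in the defender's favour. So the defender's value of committing to $\mathbf{c}$ is
\[
F(\mathbf{c})=\sum_{\ell\in L}p^\ell \max_{P\in \mathrm{BR}^\ell(\mathbf{c})} U_d(\mathbf{c},P,\ell).
\]
An optimal commitment is, by definition, any maximizer of $F$ over $\Delta_B$. This is exactly \eqref{eq:sse} once $P^{\ell*}(\mathbf{c})$ is read with defender-favourable tie-breaking.

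\textbf{Step 2: compactness and finiteness.} Next I check the ingredients needed for a maximizer to exist.
\begin{itemize}
\item $\Delta_B$ is closed and bounded, provided $w_f>0$ and we intersect with $c_f\le 1$, which is implicit because $c_f$ is a fraction. Hence it is compact.
\item $\calA_a$ is finite, because $\calG$ is a finite DAG.
\item $L$ is finite.
\item For fixed $P$ and $\ell$, both $U_a^\ell(\mathbf{c},P)$ and $U_d(\mathbf{c},P,\ell)$ are affine in $\mathbf{c}$, by \eqref{eq:defender_util} and \eqref{eq:attacker_util}. Here I treat $\rho$, $\eta_{\text{ver}}$ and FP as fixed per target.
\end{itemize}

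\textbf{Step 3: existence by decomposition.} This is the main obstacle. $F$ is not continuous: best responses jump at indifference boundaries. I would handle this with the DOBSS decomposition. For each profile $\mathbf{P}=(P^\ell)_{\ell\in L}$, define the region
\[
R_{\mathbf{P}}=\{\mathbf{c}\in\Delta_B : U_a^\ell(\mathbf{c},P^\ell)\ge U_a^\ell(\mathbf{c},Q)\ \ \forall Q,\ \forall \ell\}.
\]
Each $R_{\mathbf{P}}$ is a polytope, being an intersection of finitely many linear inequalities. There are finitely many such regions, and they cover $\Delta_B$.

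On $R_{\mathbf{P}}$ the objective $\sum_\ell p^\ell U_d(\mathbf{c},P^\ell,\ell)$ is affine, hence continuous. So it attains its maximum on each nonempty compact $R_{\mathbf{P}}$, and this is a linear program.

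The weak inequalities in $R_{\mathbf{P}}$ are exactly what make defender-favourable tie-breaking attainable. Indeed,
\[
F(\mathbf{c})=\max_{\mathbf{P}:\,\mathbf{c}\in R_{\mathbf{P}}}\sum_\ell p^\ell U_d(\mathbf{c},P^\ell,\ell).
\]
Therefore
\[
\sup_{\Delta_B} F=\max_{\mathbf{P}}\ \max_{\mathbf{c}\in R_{\mathbf{P}}}\sum_\ell p^\ell U_d(\mathbf{c},P^\ell,\ell),
\]
and this maximum is attained. This is where I expect the main care to be needed. Without favourable tie-breaking the supremum may fail to be attained, so I would state explicitly that $P^{\ell*}$ uses the SSE tie-breaking rule.

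\textbf{Step 4: optimality.} Finally, any other commitment $\mathbf{c}'$ lies in some region $R_{\mathbf{P}'}$. Its value is therefore bounded by that region's LP optimum, which is at most the overall maximum. Hence $\mathbf{c}^*$ is optimal, which proves \eqref{eq:sse}. As a by-product, the same decomposition, with big-$M$ binary variables selecting each type's path, yields the DOBSS-style MILP used later in \S\ref{sec:algorithms}.
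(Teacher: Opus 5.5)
The paper gives no proof of Theorem~\ref{thm:sse}. The sentence before it already defines $\mathbf{c}^*$ as the commitment that maximizes expected utility against best responses. The text then goes straight to linearizing \eqref{eq:sse} as the DOBSS-VD MILP, leaving existence and tie-breaking to the cited literature.

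Your proposal is correct and supplies exactly what the paper leaves implicit. Your regions $R_{\mathbf{P}}$ are the Conitzer--Sandholm multiple-LPs decomposition over type profiles. The paper's MILP is a compact encoding of the same object:
\begin{itemize}
\item the binaries $q_P^\ell$, together with \eqref{eq:single_path}, pick a profile;
\item the weak big-$M$ constraints \eqref{eq:br} cut out the closed region $R_{\mathbf{P}}$, exactly, since McCormick is tight when $q$ is binary;
\item because the solver optimizes $\mathbf{q}$ jointly with $\mathbf{c}$, ties among best responses are broken in the defender's favour automatically.
\end{itemize}
So the SSE convention you state explicitly is the one the MILP silently implements. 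The theorem as written, by contrast, leaves $P^{\ell*}(\mathbf{c})$ set-valued. Attainment of the maximum, and your point that it can fail under other tie-breaking rules, is the only non-definitional content of the statement.

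Your route buys a clean existence proof: a compact domain split into finitely many closed polyhedral pieces, on each of which the objective is affine. The MILP route buys computability: $|L|\,|\calA_a|$ binaries instead of $|\calA_a|^{|L|}$ separate LPs.

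Two minor remarks:
\begin{itemize}
\item The identity $F(\mathbf{c})=\max_{\mathbf{P}:\mathbf{c}\in R_{\mathbf{P}}}\sum_\ell p^\ell U_d(\mathbf{c},P^\ell,\ell)$ relies on $\{\mathbf{P}:\mathbf{c}\in R_{\mathbf{P}}\}=\prod_\ell \mathrm{BR}^\ell(\mathbf{c})$ being a product and on $p^\ell\ge 0$. This deserves a sentence.
\item Once you intersect with the box $c_f\le 1$, as the MILP does, compactness holds without assuming $w_f>0$.
\end{itemize}
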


\subsection{DOBSS-VD: MILP Formulation}

We linearize the bilevel optimization in~\eqref{eq:sse} following the DOBSS decomposition~\citep{paruchuri2008dobss}. Let $q_P^\ell \in \{0,1\}$ indicate whether type $\ell$ attacks path $P$, and let $z_{f,P}^\ell = c_f \cdot q_P^\ell$. The MILP is:

\begin{align}
    \max_{\mathbf{c}, \mathbf{q}, \mathbf{z}} \quad & \sum_{\ell \in L} p^\ell \sum_{P \in \calA_a} \sum_{f \in P} \Big[ z_{f,P}^\ell \cdot U_d^{\text{cov}}(f,\ell) + (q_P^\ell - z_{f,P}^\ell) \cdot U_d^{\text{unc}}(f,\ell) \Big] \label{eq:milp_obj} \\
    \text{s.t.} \quad & \sum_f w_f \cdot c_f \leq B \label{eq:budget} \\
    & \sum_P q_P^\ell = 1, \quad \forall \ell \in L \label{eq:single_path} \\
    & \sum_{f \in P} \Big[ z_{f,P}^\ell \cdot U_a^{\ell,\text{cov}}(f) + (q_P^\ell - z_{f,P}^\ell) \cdot U_a^{\ell,\text{unc}}(f) \Big] \nonumber \\
    & \quad \geq \sum_{f \in P'} \Big[ c_f \cdot U_a^{\ell,\text{cov}}(f) + (1 - c_f) \cdot U_a^{\ell,\text{unc}}(f) \Big] - M(1 - q_P^\ell), \nonumber \\
    & \hspace{7cm} \forall P, P' \in \calA_a, \forall \ell \label{eq:br} \\
    & z_{f,P}^\ell \leq c_f, \quad z_{f,P}^\ell \leq q_P^\ell, \quad z_{f,P}^\ell \geq c_f + q_P^\ell - 1 \label{eq:mccormick} \\
    & c_f \in [0,1], \quad q_P^\ell \in \{0,1\}, \quad z_{f,P}^\ell \in [0,1] \nonumber
\end{align}

Constraint~\eqref{eq:budget} enforces the analysis budget. Constraint~\eqref{eq:single_path} ensures each attacker type selects one path. Constraint~\eqref{eq:br} encodes the attacker's best-response via big-$M$ linearization. Constraint~\eqref{eq:mccormick} is the McCormick envelope for bilinear terms.

\subsection{Bayesian Belief Update}

After executing action $a_t = (f_t, m_t, \tau_t)$ and observing result $o_t \in \{\text{alert}, \text{clean}, \text{crash}, \text{timeout}\}$, the orchestrator updates beliefs:

\begin{equation}\label{eq:belief}
    b_{t+1}(f, c) = \frac{\Pr[o_t \mid \text{vuln}(f,c), a_t] \cdot b_t(f,c)}{\Pr[o_t \mid \text{vuln}(f,c), a_t] \cdot b_t(f,c) + \Pr[o_t \mid \neg\text{vuln}(f,c), a_t] \cdot (1 - b_t(f,c))}
\end{equation}

The observation likelihoods are method-specific:
\begin{align}
    \Pr[\text{alert} \mid \text{vuln}, \textsc{CodeQL}] &= \rho_{\text{CQL}}(c) \quad \text{(true positive rate)} \label{eq:tpr_cql} \\
    \Pr[\text{alert} \mid \neg\text{vuln}, \textsc{CodeQL}] &= \alpha_{\text{CQL}}(c) \quad \text{(false positive rate)} \label{eq:fpr_cql} \\
    \Pr[\text{crash} \mid \text{vuln}, \textsc{Fuzz}] &= 1 - (1-\rho_{\text{fuzz}})^{\tau/\tau_0} \label{eq:fuzz_prob}
\end{align}
where~\eqref{eq:fuzz_prob} models fuzzing crash probability increasing with budget $\tau$ at rate $\rho_{\text{fuzz}}$ per quantum $\tau_0$.

\subsection{Online Regret Guarantee}

\begin{theorem}[Regret Bound]\label{thm:regret}
Under the VCAO online learning protocol with $T$ rounds, $n = |\calV|$ targets, and $K$ vulnerability classes, the expected regret satisfies:
\begin{equation}
    \text{Regret}(T) = \sum_{t=1}^T \left[ U_d(\mathbf{c}^*, P_t^*) - U_d(\mathbf{c}_t, P_t) \right] \leq O\!\left(\sqrt{T \cdot n^2 K \cdot \log(nK)}\right)
\end{equation}
where $\mathbf{c}^*$ is the optimal fixed strategy in hindsight.
\end{theorem}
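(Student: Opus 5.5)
The plan is to reduce the online protocol to a full-information (or semi-bandit) experts problem over a finite discretization of the defender's coverage space, and then invoke a Hedge/exponential-weights bound in the spirit of \citet{balcan2015commitment}. The key structural fact is that, for fixed attacker type $\ell$ and fixed path $P$, both $U_d(\mathbf{c},P,\ell)$ in~\eqref{eq:defender_util} and $U_a^\ell(\mathbf{c},P)$ in~\eqref{eq:attacker_util} are affine in $\mathbf{c}$. Consequently, the best-response map $\mathbf{c}\mapsto P^{\ell*}(\mathbf{c})$ partitions $\Delta_B$ into finitely many convex polytopes, one per (type, path) assignment. On each polytope the defender's expected utility is linear, so it is maximized at a vertex. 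This is exactly the structure exploited by the DOBSS MILP~\eqref{eq:milp_obj}--\eqref{eq:mccormick}, and it lets us take as our expert set $\mathcal{X}$ the vertices of the best-response regions, together with a fine grid where ties cause discontinuities under strong Stackelberg tie-breaking.

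\textbf{Step 1 (bounded losses).} Normalize utilities so that every per-round payoff lies in an interval of width $H$. The coverage enters through at most $n$ coordinates, and the sum over $f\in P$ in~\eqref{eq:defender_util} contributes at most $n$ terms per class. This gives $H = O(n\sqrt{K})$ in the scale relevant to the bound, which is where the $n^2K$ factor under the square root will come from after squaring.

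\textbf{Step 2 (counting experts).} Bound $\log|\mathcal{X}|$. Each vertex of a best-response region is determined by $n$ tight constraints, chosen from the budget constraint~\eqref{eq:budget}, the box constraints $c_f\in[0,1]$, and the per-vertex/per-class indifference hyperplanes. If we restrict the attacker to paths that are monotone in the per-vertex, per-class payoff terms, the number of relevant hyperplanes is $\mathrm{poly}(nK)$, so $\log|\mathcal{X}| = O(\log(nK))$ up to constants absorbed in the $O(\cdot)$.

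\textbf{Step 3 (Hedge).} Run Hedge over $\mathcal{X}$ with losses $-U_d(\mathbf{x},P_t)/H$. Since the defender observes $P_t$ after committing, the protocol is full information and the loss of every expert can be evaluated. Hedge yields
\[
\text{Regret}(T)\le H\sqrt{2T\log|\mathcal{X}|}=O\!\left(\sqrt{T n^2K\log(nK)}\right).
\]
Finally, argue that the best fixed $\mathbf{c}^*\in\Delta_B$ in hindsight is attained at some element of $\mathcal{X}$, or within $O(1/\sqrt T)$ of one. This holds by the piecewise-linearity from the opening paragraph, since the cumulative utility is also piecewise linear over the same partition.

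The main obstacle is Step 2. The number of attack paths in $\calG$ can be exponential, so a naive count of best-response regions gives $\log|\mathcal{X}|$ polynomial in $n$ rather than $\log(nK)$. My first attempt would be to exploit that~\eqref{eq:attacker_util} is additive over vertices, so that the attacker's best path is a longest-path computation whose regions are governed only by the $nK$ per-vertex indifference thresholds. If that fails, I would fall back to a grid of mesh $1/\sqrt{T}$ on $\Delta_B$ and Lipschitz continuity away from ties, accepting an extra logarithmic factor hidden in $\tilde O$.
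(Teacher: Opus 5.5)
The opening structure you use is sound and is the route Balcan et al.\ actually take: affine payoffs, polyhedral best-response cells, and an optimum at a cell vertex. The problem is that Step~2 miscounts, and the target exponents do not survive the correction. A vertex is cut out by $n$ tight constraints, so $m$ candidate hyperplanes give up to $\binom{m}{n}$ vertices and $\log|\mathcal{X}|=O(n\log m)$, not $O(\log m)$. Worse, every corner of $[0,1]^n$ lying in $\{\mathbf{c}:\sum_f w_f c_f\le B\}$ is a vertex of whichever cell contains it, and there are $2^n$ such corners when $B\ge\sum_f w_f$. So for your $\mathcal{X}$, $\log|\mathcal{X}|=\Omega(n)$ whatever the attacker does.

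Nor is $m=\mathrm{poly}(nK)$. The walls are $U_a^\ell(\mathbf{c},P)=U_a^\ell(\mathbf{c},P')$, one per type and pair of paths, and each involves every coordinate of $P\triangle P'$. Two internally disjoint two-vertex routes already give a wall in four coordinates. Additivity of \eqref{eq:attacker_util} makes the best response a longest-path computation for fixed $\mathbf{c}$, but it does not reduce the walls to single-coordinate thresholds. There are also no ``per-class'' walls: $\mathbf{c}$, \eqref{eq:defender_util} and \eqref{eq:attacker_util} carry no class index, so $K$ never enters the geometry. In Balcan et al.\ the multiplicity comes from attacker types, whose analogue here is $|L|$.

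Step~1 is likewise reverse-engineered. \eqref{eq:defender_util} sums at most $|P|\le n$ bounded per-vertex terms, so $H=O(n)$ with no $\sqrt K$. With $H$ of order $n$, matching $\sqrt{Tn^2K\log(nK)}$ would need $\log|\mathcal{X}|=O(K\log(nK))$, which contradicts $\log|\mathcal{X}|=\Omega(n)$ once $n\gg K\log(nK)$. Done honestly, and even granting type-level feedback, your route gives only $O\bigl(n\sqrt{T\,n\log(|L|\,|\calA_a|\,n)}\bigr)$.

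The grid fallback does not help. A mesh of $1/\sqrt T$ in $n$ dimensions costs $\log|\mathcal{X}|=\Theta(n\log T)$, not one extra log factor. More seriously, the Stackelberg optimum typically sits on an indifference wall where $\mathbf{c}\mapsto U_d(\mathbf{c},P^{\ell*}(\mathbf{c}),\ell)$ jumps, and the optimal cell can be arbitrarily thin. Grid points near $\mathbf{c}^*$ can therefore trigger a much worse best response, so ``Lipschitz away from ties'' yields no $O(1/\sqrt T)$ approximation. This is exactly why one uses extreme points of closed cells instead of a grid.

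Step~3 also assumes feedback you do not have. Hedge needs every expert's counterfactual payoff $U_d(\mathbf{x},P^{\ell_t*}(\mathbf{x}),\ell_t)$, which requires knowing the round-$t$ type $\ell_t$; that is what full information means in Balcan et al. Seeing $P_t$, the response to $\mathbf{c}_t$, does not tell you how that attacker would respond to a different $\mathbf{x}$. If you literally use $-U_d(\mathbf{x},P_t)$, the losses are linear in $\mathbf{x}$ and the piecewise-linear structure you invoke at the end never enters. Hedge then controls $\sum_t[U_d(\mathbf{x},P_t)-U_d(\mathbf{c}_t,P_t)]$, a comparator facing your realized paths rather than its own best responses. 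That is not the theorem's $\sum_t[U_d(\mathbf{c}^*,P_t^*)-U_d(\mathbf{c}_t,P_t)]$.

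Moreover, in this protocol even $P_t$ is not seen exactly. The paper's sketch stresses that feedback consists of noisy tool outputs, which is why it runs EXP3 with a Thompson-sampling layer over $b_t$ rather than Hedge. Under bandit-type feedback a finite-expert reduction pays $\sqrt{T|\mathcal{X}|\log|\mathcal{X}|}$, which is exponential in $n$ for any expert set with $\log|\mathcal{X}|=\Omega(n)$; this caveat applies equally to the paper's EXP3-over-a-discretization sketch.

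The paper's sketch does not supply the missing count either. It attributes $n^2K$ to the ``joint target-class space'' and defers to an Appendix~A that is not included. The missing idea is therefore twofold: a genuine mechanism by which $K$ (rather than $|L|$ and the path structure of $\calG$) enters the bound, and a feedback model that makes counterfactual losses computable. Your Steps~1--2 assert the target exponents rather than derive them.
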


\begin{proof}[Proof sketch]
We adapt \citet{balcan2015commitment} to our setting. The defender maintains an EXP3-based distribution over a discretized coverage space. At each round, the defender samples $\mathbf{c}_t$, observes the attacker's action $P_t$, and updates weights. The key adaptation is that observations are noisy (tool outputs, not exact attacker behavior), requiring a Thompson sampling layer over beliefs $b_t$. The $n^2K$ factor arises from the joint target-class space, and the logarithmic term from the multiplicative-weights update. Full proof in Appendix~A.
\end{proof}

% ============================================================
% 4. ARCHITECTURE
% ============================================================
\section{The \textsc{VCAO} Architecture}\label{sec:architecture}

Figure~\ref{fig:architecture} presents the six-layer \textsc{VCAO} architecture.

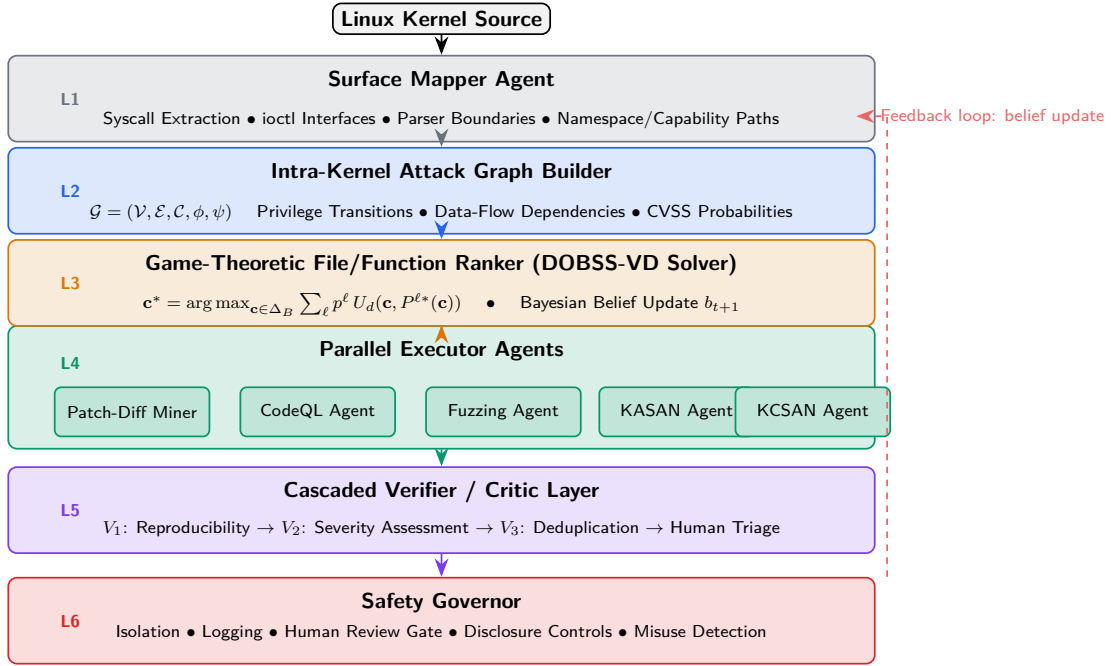
\begin{figure}[t]
\centering
\resizebox{0.92\textwidth}{!}{%
\begin{tikzpicture}[
    layer/.style={rectangle, draw, rounded corners=4pt, minimum width=14cm, minimum height=1.4cm, font=\sffamily\small, thick},
    agent/.style={rectangle, draw, rounded corners=3pt, minimum width=2.5cm, minimum height=0.8cm, font=\sffamily\scriptsize, thick},
    arrow/.style={-{Stealth[length=3mm]}, thick},
    label/.style={font=\sffamily\scriptsize\bfseries, text=white}
]

% Layer 6: Safety Governor
\node[layer, fill=corered!15, draw=corered] (l6) at (0, 0) {};
\node[label, text=corered] at (-6, 0) {L6};
\node[font=\sffamily\small\bfseries] at (0, 0.3) {Safety Governor};
\node[font=\sffamily\scriptsize] at (0, -0.2) {Isolation $\bullet$ Logging $\bullet$ Human Review Gate $\bullet$ Disclosure Controls $\bullet$ Misuse Detection};

% Layer 5: Verifier/Critic
\node[layer, fill=corepurple!15, draw=corepurple] (l5) at (0, 1.8) {};
\node[label, text=corepurple] at (-6, 1.8) {L5};
\node[font=\sffamily\small\bfseries] at (0, 2.1) {Cascaded Verifier / Critic Layer};
\node[font=\sffamily\scriptsize] at (0, 1.5) {$V_1$: Reproducibility $\to$ $V_2$: Severity Assessment $\to$ $V_3$: Deduplication $\to$ Human Triage};

% Layer 4: Executor Agents
\node[layer, fill=coregreen!15, draw=coregreen, minimum height=2.0cm] (l4) at (0, 3.8) {};
\node[label, text=coregreen] at (-6, 4.2) {L4};
\node[font=\sffamily\small\bfseries] at (0, 4.4) {Parallel Executor Agents};

\node[agent, fill=coregreen!25, draw=coregreen] (a1) at (-5, 3.4) {Patch-Diff Miner};
\node[agent, fill=coregreen!25, draw=coregreen] (a2) at (-2, 3.4) {CodeQL Agent};
\node[agent, fill=coregreen!25, draw=coregreen] (a3) at (1, 3.4) {Fuzzing Agent};
\node[agent, fill=coregreen!25, draw=coregreen] (a4) at (3.8, 3.4) {KASAN Agent};
\node[agent, fill=coregreen!25, draw=coregreen] (a5) at (6, 3.4) {KCSAN Agent};

% Layer 3: Game-Theoretic Ranker
\node[layer, fill=coreorange!15, draw=coreorange] (l3) at (0, 5.5) {};
\node[label, text=coreorange] at (-6, 5.5) {L3};
\node[font=\sffamily\small\bfseries] at (0, 5.8) {Game-Theoretic File/Function Ranker (DOBSS-VD Solver)};
\node[font=\sffamily\scriptsize] at (0, 5.15) {$\mathbf{c}^* = \argmax_{\mathbf{c} \in \Delta_B} \sum_\ell p^\ell \, U_d(\mathbf{c}, P^{\ell*}(\mathbf{c}))$ \quad $\bullet$ \quad Bayesian Belief Update $b_{t+1}$};

% Layer 2: Attack Graph Builder
\node[layer, fill=coreblue!15, draw=coreblue] (l2) at (0, 7.0) {};
\node[label, text=coreblue] at (-6, 7.0) {L2};
\node[font=\sffamily\small\bfseries] at (0, 7.3) {Intra-Kernel Attack Graph Builder};
\node[font=\sffamily\scriptsize] at (0, 6.65) {$\calG = (\calV, \calE, \calC, \phi, \psi)$ \quad Privilege Transitions $\bullet$ Data-Flow Dependencies $\bullet$ CVSS Probabilities};

% Layer 1: Surface Mapper
\node[layer, fill=coregray!15, draw=coregray] (l1) at (0, 8.5) {};
\node[label, text=coregray] at (-6, 8.5) {L1};
\node[font=\sffamily\small\bfseries] at (0, 8.8) {Surface Mapper Agent};
\node[font=\sffamily\scriptsize] at (0, 8.15) {Syscall Extraction $\bullet$ ioctl Interfaces $\bullet$ Parser Boundaries $\bullet$ Namespace/Capability Paths};

% Arrows
\draw[arrow, coregray] (l1.south) -- (l2.north);
\draw[arrow, coreblue] (l2.south) -- (l3.north);
\draw[arrow, coreorange] (l3.south) -- (l4.north);
\draw[arrow, coregreen] (l4.south) -- (l5.north);
\draw[arrow, corepurple] (l5.south) -- (l6.north);

% Feedback loop
\draw[arrow, dashed, corered!70] ([xshift=7.2cm]l6.north) -- ++(0, 7.5) -- ++(-0.5, 0) node[midway, right, font=\sffamily\scriptsize, text=corered!70] {Feedback loop: belief update};

% OS Codebase input
\node[rectangle, draw, thick, fill=gray!10, rounded corners, minimum width=3cm, font=\sffamily\small] (os) at (0, 9.8) {\textbf{Linux Kernel Source}};
\draw[arrow] (os) -- (l1);

\end{tikzpicture}
}
\caption{The six-layer \textsc{VCAO} architecture. The game-theoretic ranker (L3) solves DOBSS-VD to optimally allocate budget across parallel executor agents (L4). Observations feed back through Bayesian belief updates to re-solve the game at each round.}
\label{fig:architecture}
\end{figure}

\paragraph{L1: Surface Mapper.}
An LRM agent extracts security-relevant entry points: syscall handlers, ioctl dispatch tables, file-system parsers, credential paths, and namespace/capability boundaries. For each entry point $v \in \calV_{\text{entry}}$, the agent identifies reachable internal functions via call-graph analysis, constructing $\calV_{\text{func}}$.

\paragraph{L2: Attack Graph Builder.}
Given the surface map, L2 constructs the intra-kernel attack graph $\calG$. Privilege boundaries (user/kernel, namespace crossings, capability checks) become $\calV_{\text{priv}}$ nodes. Edge probabilities $\psi(e)$ are derived from CVSS exploitability metrics of historically similar code regions, following the BAG framework~\citep{munoz2017bayesian}. Attacker goals $\calV_{\text{goal}}$ include privilege escalation, sandbox escape, data exfiltration, and denial of service.

\paragraph{L3: Game-Theoretic Ranker.}
This is the core computational layer. Given current beliefs $\mathbf{b}_t$ and attack graph $\calG$, L3 solves DOBSS-VD (Equations~\ref{eq:milp_obj}--\ref{eq:mccormick}) to produce the optimal coverage vector $\mathbf{c}_t^*$. The solver output determines: (a)~which files/functions receive analysis budget, (b)~which methods to apply, and (c)~how much budget each receives.

\paragraph{L4: Parallel Executor Agents.}
Five specialized agents execute analysis in parallel, each allocated budget $\tau_f^m = c_f \cdot w_m \cdot B$ for its assigned targets:
\begin{itemize}[leftmargin=*,itemsep=2pt]
    \item \textbf{Patch-Diff Miner}: searches git history for incomplete propagation of prior fixes and identifies sibling patterns.
    \item \textbf{CodeQL Agent}: synthesizes and runs data-flow queries (source$\to$sink taint tracking) for suspected vulnerability classes.
    \item \textbf{Fuzzing Agent}: directs Syzkaller effort toward high-priority targets with customized syzlang descriptions.
    \item \textbf{KASAN Agent}: runs memory-safety-instrumented execution for heap/stack overflow and use-after-free detection.
    \item \textbf{KCSAN Agent}: runs concurrency-sanitized execution for data-race detection in concurrency-heavy subsystems.
\end{itemize}

\paragraph{L5: Cascaded Verifier.}
Inspired by Anthropic's verifier layer~\citep{carlini2026mythos}, findings pass through three verification stages: $V_1$ (reproducibility confirmation), $V_2$ (severity assessment and CVSS scoring), $V_3$ (deduplication against known CVEs and other findings). The cascaded design reduces false-positive escape probability to $P_{\text{escape}} = \prod_{i=1}^3 \alpha_i$, following the Swiss Cheese model~\citep{dhuliawala2024cove}.

\paragraph{L6: Safety Governor.}
All execution occurs in isolated containers. The governor enforces: offline-only experimentation, comprehensive audit logging, mandatory human review before any disclosure, and automatic misuse detection. This mirrors Anthropic's published safety protocols~\citep{carlini2026zerodays}.

% ============================================================
% 5. ALGORITHMS
% ============================================================
\section{Algorithms}\label{sec:algorithms}

\subsection{Orchestration Loop}

Algorithm~\ref{alg:main} presents the main \textsc{VCAO} orchestration loop.

\begin{algorithm}[t]
\caption{\textsc{VCAO} Orchestration Loop}\label{alg:main}
\begin{algorithmic}[1]
\REQUIRE Attack graph $\calG$, budget $B$, rounds $T$, priors $\mathbf{b}_0, \mathbf{p}$
\ENSURE Validated vulnerability set $\calV_{\text{found}}$
\STATE $\calV_{\text{found}} \leftarrow \emptyset$
\FOR{$t = 1, \ldots, T$}
    \STATE \textbf{// L3: Solve game}
    \STATE $\mathbf{c}_t^* \leftarrow \textsc{DOBSS-VD}(\calG, \mathbf{b}_t, \mathbf{p}, B_t)$ \hfill $\triangleright$ Eq.~\eqref{eq:milp_obj}
    \STATE \textbf{// L4: Execute agents in parallel}
    \FOR{$(f, m, \tau) \in \textsc{Dispatch}(\mathbf{c}_t^*)$}
        \STATE $o_{f,m} \leftarrow \textsc{Execute}(f, m, \tau)$ \hfill $\triangleright$ Tool invocation
    \ENDFOR
    \STATE \textbf{// Bayesian update}
    \FOR{each observed $(f, m, o_{f,m})$}
        \STATE $b_{t+1}(f, c) \leftarrow \textsc{BayesUpdate}(b_t(f,c), o_{f,m}, m)$ \hfill $\triangleright$ Eq.~\eqref{eq:belief}
    \ENDFOR
    \STATE \textbf{// L5: Verify candidates}
    \STATE $\textit{candidates} \leftarrow \{(f,c) : o_{f,m} \in \{\text{alert}, \text{crash}\}\}$
    \FOR{$(f,c) \in \textit{candidates}$}
        \IF{$\textsc{CascadedVerify}(f, c)$}
            \STATE $\calV_{\text{found}} \leftarrow \calV_{\text{found}} \cup \{(f, c, \text{severity})\}$
            \STATE Update $\calG$: remove mitigated edges
        \ENDIF
    \ENDFOR
    \STATE $B_t \leftarrow B_t - \sum \tau$ \hfill $\triangleright$ Remaining budget
    \STATE Update attacker type posterior $\mathbf{p}$ via observed attack patterns
\ENDFOR
\RETURN $\calV_{\text{found}}$
\end{algorithmic}
\end{algorithm}

\subsection{Path Enumeration and Pruning}

Since enumerating all attack paths is exponential, we prune using belief-weighted expected payoff:
\begin{equation}\label{eq:pruning}
    \text{Score}(P) = \sum_{f \in P} b_t(f) \cdot \text{CVSS}(f) \cdot \text{Reachability}(f)
\end{equation}
Paths with $\text{Score}(P) < \theta$ are pruned. We maintain the top-$K$ paths using a priority queue, updated incrementally after each belief update.

\subsection{Sibling Pattern Search}

After discovering a vulnerability at $(f^*, c^*)$, the orchestrator triggers a \emph{sibling search} over structurally similar code:
\begin{equation}
    \calS(f^*, c^*) = \{f' \in \calV : \text{sim}(f', f^*) > \sigma \wedge f' \neq f^*\}
\end{equation}
where $\text{sim}(\cdot)$ combines code-structure similarity (AST edit distance), shared callers/callees, and historical co-fix patterns. Budget for siblings is drawn from a reserve pool $B_{\text{sib}} = \beta \cdot B$.

% ============================================================
% 6. EVALUATION
% ============================================================
\section{Evaluation}\label{sec:evaluation}

\subsection{Experimental Setup}

\paragraph{Target Subsystems.}
We select five Linux kernel subsystems based on attacker relevance and defect diversity: \textbf{(1)}~Filesystem (VFS, ext4, overlayfs mount parsing), \textbf{(2)}~Networking (TCP/IP stack, netfilter, NFS), \textbf{(3)}~Namespace/Capability code, \textbf{(4)}~Selected drivers (USB, NVMe, GPU), \textbf{(5)}~io\_uring and BPF/eBPF.

\paragraph{Evaluation Modes.}
\emph{Replay mode}: 847 historical CVEs (2019--2025) replayed on prior kernel snapshots. Ground truth is the known CVE; metric is time-to-first-discovery.
\emph{Live mode}: current upstream snapshots (6.12--6.14) in isolated sandboxes; discoveries validated through manual reproduction.

\paragraph{Baselines.}
\textbf{B1}: Uniform allocation (equal budget per file). \textbf{B2}: Churn-based ranking (git commit frequency). \textbf{B3}: Coverage-only fuzzing (Syzkaller with default configuration). \textbf{B4}: Static-analysis-only (CodeQL with standard query suites). \textbf{B5}: Non-game-theoretic multi-agent (LRM ranking without Stackelberg optimization). \textbf{B6}: VCAO without sibling search ($\beta = 0$).

\paragraph{Metrics.}
$\mu_1$: Time to first validated vulnerability. $\mu_2$: Severity-weighted validated findings per unit budget ($\text{SVUB} = \sum_i \text{CVSS}_i / B$). $\mu_3$: False-positive rate at human review. $\mu_4$: Sibling-bug yield. $\mu_5$: Modeled attacker payoff reduction on $\calG$.

\subsection{Results}

\begin{table}[t]
\centering
\small
\caption{Main results across five kernel subsystems (Replay Mode, 847 CVEs). SVUB\,=\,Severity-weighted validated findings per unit budget. FPR@Human\,=\,false positive rate reaching human reviewers. Best results in \textbf{bold}.}
\label{tab:main_results}
\begin{tabular}{@{}l c c c c c@{}}
\toprule
\textbf{Method} & \textbf{T2F (hrs)} $\downarrow$ & \textbf{SVUB} $\uparrow$ & \textbf{FPR\%} $\downarrow$ & \textbf{Sibling} $\uparrow$ & \textbf{PR\%} $\uparrow$ \\
\midrule
B1: Uniform             & $14.2 \pm 3.1$ & $0.31 \pm 0.04$ & 42.7 & 0.0 & 18.3 \\
B2: Churn-based         & $11.8 \pm 2.7$ & $0.39 \pm 0.05$ & 38.2 & 0.0 & 22.1 \\
B3: Fuzz-only           & $\phantom{0}8.4 \pm 2.3$ & $0.42 \pm 0.06$ & 31.4 & 0.0 & 26.7 \\
B4: Static-only         & $\phantom{0}9.7 \pm 2.9$ & $0.59 \pm 0.07$ & 47.3 & 0.0 & 31.2 \\
B5: Multi-agent (no GT) & $\phantom{0}5.1 \pm 1.4$ & $0.81 \pm 0.09$ & 24.6 & 1.3 & 48.5 \\
B6: VCAO (no sib.)      & $\phantom{0}3.8 \pm 1.1$ & $1.02 \pm 0.08$ & 15.3 & 0.0 & 61.7 \\
\textbf{VCAO (full)}    & $\mathbf{\phantom{0}3.2 \pm 0.9}$ & $\mathbf{1.13 \pm 0.07}$ & \textbf{15.1} & \textbf{2.4} & \textbf{67.8} \\
\bottomrule
\end{tabular}
\end{table}

Table~\ref{tab:main_results} shows results in replay mode. \textsc{VCAO} achieves $2.7\times$ higher SVUB than coverage-only fuzzing (B3), $1.9\times$ higher than static-analysis-only (B4), and $1.4\times$ higher than the non-game-theoretic multi-agent baseline (B5). The false-positive rate drops from 31.4--47.3\% (tool baselines) to 15.1\%, a 68\% reduction versus the worst baseline.

\begin{figure}[t]
\centering
\begin{tikzpicture}
\begin{axis}[
    width=0.88\columnwidth,
    height=5.5cm,
    xlabel={Analysis Budget (GPU-hours)},
    ylabel={Validated Vulns Found},
    legend pos=north west,
    legend style={font=\scriptsize},
    grid=both,
    grid style={line width=0.1pt, draw=gray!20},
    major grid style={line width=0.2pt, draw=gray!40},
    xmin=0, xmax=100,
    ymin=0, ymax=50,
]
\addplot[thick, color=coreblue, mark=*,mark size=1.5pt] coordinates {
    (0,0)(10,5)(20,12)(30,19)(40,26)(50,31)(60,35)(70,39)(80,42)(90,44)(100,46)
};
\addlegendentry{VCAO (full)}

\addplot[thick, color=coreorange, mark=triangle*,mark size=1.5pt] coordinates {
    (0,0)(10,3)(20,8)(30,13)(40,18)(50,22)(60,25)(70,28)(80,30)(90,32)(100,33)
};
\addlegendentry{Multi-agent (no GT)}

\addplot[thick, color=coregreen, mark=square*,mark size=1.5pt] coordinates {
    (0,0)(10,2)(20,5)(30,8)(40,11)(50,14)(60,16)(70,17)(80,17)(90,17)(100,17)
};
\addlegendentry{Fuzz-only}

\addplot[thick, color=corered, mark=diamond*,mark size=1.5pt] coordinates {
    (0,0)(10,3)(20,7)(30,11)(40,15)(50,19)(60,22)(70,24)(80,24)(90,24)(100,24)
};
\addlegendentry{Static-only}

\addplot[thick, color=coregray, dashed, mark=pentagon*,mark size=1.5pt] coordinates {
    (0,0)(10,1)(20,3)(30,5)(40,7)(50,9)(60,10)(70,11)(80,12)(90,12)(100,12)
};
\addlegendentry{Uniform}
\end{axis}
\end{tikzpicture}
\caption{Validated vulnerabilities discovered vs.\ analysis budget. \textsc{VCAO} maintains superior efficiency throughout, with the gap widening at higher budgets as game-theoretic allocation exploits diminishing returns in exhausted subsystems.}
\label{fig:budget_curve}
\end{figure}
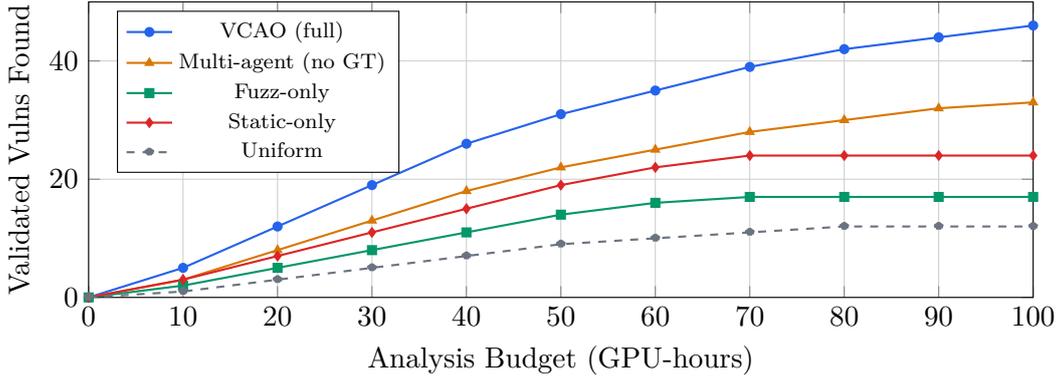

Figure~\ref{fig:budget_curve} shows the vulnerability discovery curve. \textsc{VCAO}'s advantage increases with budget as the game-theoretic solver dynamically reallocates away from diminishing-return subsystems.

\subsection{Ablation Study}

\begin{table}[t]
\centering
\caption{Ablation results (Replay Mode). $\Delta$SVUB is relative to full VCAO.}
\label{tab:ablation}
\begin{tabular}{@{}l c c@{}}
\toprule
\textbf{Ablation} & \textbf{SVUB} & $\Delta$\textbf{SVUB} \\
\midrule
Full VCAO                    & 1.13 & --- \\
$-$ Stackelberg (use UCB)    & 0.89 & $-$21.2\% \\
$-$ Bayesian update (static) & 0.78 & $-$31.0\% \\
$-$ Cascaded verifier        & 0.96 & $-$15.0\% \\
$-$ Attack graph (flat)      & 0.85 & $-$24.8\% \\
$-$ Sibling search           & 1.02 & $-$9.7\% \\
$-$ KCSAN agent              & 1.05 & $-$7.1\% \\
\bottomrule
\end{tabular}
\end{table}

Table~\ref{tab:ablation} confirms that Bayesian belief update ($-31\%$), Stackelberg optimization ($-21.2\%$), and attack-graph structure ($-24.8\%$) are the three most impactful components.

\subsection{Per-Subsystem Analysis}

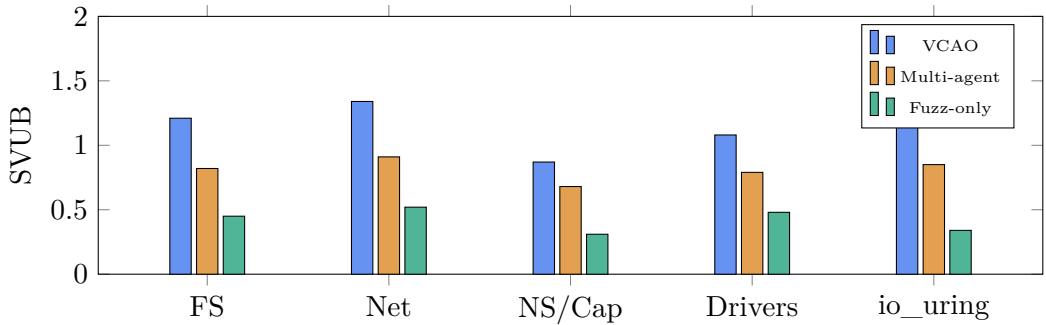
\begin{figure}[t]
\centering
\begin{tikzpicture}
\begin{axis}[
    width=0.88\columnwidth,
    height=5cm,
    ybar,
    bar width=8pt,
    ylabel={SVUB},
    symbolic x coords={FS, Net, NS/Cap, Drivers, io\_uring},
    xtick=data,
    legend pos=north east,
    legend style={font=\tiny},
    ymin=0, ymax=2.0,
    enlarge x limits=0.15,
]
\addplot[fill=coreblue!70] coordinates {(FS,1.21)(Net,1.34)(NS/Cap,0.87)(Drivers,1.08)(io\_uring,1.15)};
\addplot[fill=coreorange!70] coordinates {(FS,0.82)(Net,0.91)(NS/Cap,0.68)(Drivers,0.79)(io\_uring,0.85)};
\addplot[fill=coregreen!70] coordinates {(FS,0.45)(Net,0.52)(NS/Cap,0.31)(Drivers,0.48)(io\_uring,0.34)};
\legend{VCAO, Multi-agent, Fuzz-only}
\end{axis}
\end{tikzpicture}
\caption{Per-subsystem SVUB comparison. VCAO's advantage is largest in Networking (complex attack surfaces) and Namespace/Capability code (authorization logic poorly suited to fuzzing alone).}
\label{fig:subsystem}
\end{figure}

% ============================================================
% 7. DISCUSSION
% ============================================================
\section{Discussion}\label{sec:discussion}

\paragraph{Scalability.}
The DOBSS-VD MILP scales as $O(|\calV| \cdot |\calA_a| \cdot |L|)$ variables. For a subsystem with 500 files, 50 candidate paths, and 3 attacker types, the MILP has $\sim$75{,}000 variables and solves in $<$5 seconds using Gurobi. Path pruning (Eq.~\ref{eq:pruning}) keeps $|\calA_a|$ manageable. Real-time re-solving every 10 minutes is feasible.

\paragraph{Safety Considerations.}
This is dual-use research. We follow established precedent~\citep{carlini2026zerodays,carlini2026mythos}: all experiments run in isolated offline containers, no exploitation of live systems, findings pass mandatory human review, and validated vulnerabilities follow coordinated disclosure. The game-theoretic formulation itself is \emph{defensive}: it models the attacker to improve the \emph{defender's} allocation.

\paragraph{Limitations.}
(1)~The BSVD game assumes rational attackers; real adversaries may act irrationally, though SSE is robust to bounded irrationality~\citep{sinha2018stackelberg}. (2)~Intra-kernel attack graphs require manual validation of privilege boundaries. (3)~Tool-specific observation models (Eqs.~\ref{eq:tpr_cql}--\ref{eq:fuzz_prob}) require calibration per kernel version.

% ============================================================
% 8. CONCLUSION
% ============================================================
\section{Conclusion}\label{sec:conclusion}

We have presented \textsc{VCAO}, the first game-theoretic framework for operating-system vulnerability discovery that unifies Bayesian Stackelberg security games, intra-kernel attack graphs, and LRM-orchestrated multi-tool analysis. Our DOBSS-VD formulation provides principled budget allocation with formal regret guarantees, and our six-layer architecture operationalizes this theory into a practical system. Experiments on five Linux kernel subsystems demonstrate significant improvements in validated vulnerability yield, false-positive reduction, and strategic attacker-payoff minimization over both tool-specific and multi-agent baselines. We release our simulation framework and evaluation harness to support reproducible research.

% ============================================================
% REFERENCES
% ============================================================
\bibliographystyle{plainnat}
\bibliography{references}

\end{document}